\let\MYcaption\@makecaption
\let\@makecaption\MYcaption
\let\color@begingroup\relax
   \let\color@endgroup\relax}{}%
\def\fix@ieeecolor@hbox#1{%
  \hbox{\color@begingroup#1\color@endgroup}}
\patchcmd\@makecaption{\hbox}{\fix@ieeecolor@hbox}{}{\FAILED}
\patchcmd\@makecaption{\hbox}{\fix@ieeecolor@hbox}{}{\FAILED}
\newtheorem{definition}{Definition}[section]
\newtheorem{theorem}{Theorem}[section]
\newtheorem{lemma}{Lemma}[section]
\newtheorem{remark}{Remark}[section]
\newtheorem{assumption}{Assumption}[section]
\NewDocumentCommand\bbm{}{ \begin{bmatrix} }
\NewDocumentCommand\ebm{}{ \end{bmatrix} }
\NewDocumentCommand\Vector{m}{ \boldsymbol{\mathbf{#1}} }
\NewDocumentCommand\Diag{m}{ \mathrm{diag} \left\{ #1 \right\} }
\NewDocumentCommand\Real{}{ \mathbb{R} }
\NewDocumentCommand\Integers{}{ \mathbb{N} }
\NewDocumentCommand\Comp{}{ \mathcal{K} }
\NewDocumentCommand\Data{}{ \mathcal{D} }
\NewDocumentCommand\Expectation{m}{ \mathbb{E}\left[#1\right] }
\NewDocumentCommand\NormalDist{mm}{ \mathcal{N}\left(#1,#2\right) }
\NewDocumentCommand\Prior{m}{\tilde{#1}}
\NewDocumentCommand\Mean{m}{\pmb{\mu}^{m}}
\NewDocumentCommand\mean{}{\mu}
\NewDocumentCommand\Cov{m}{\pmb{\Sigma}^{m}}
\NewDocumentCommand\cov{}{\sigma}
\NewDocumentCommand\X{}{ \Vector{x} }
\NewDocumentCommand\Xd{}{ \dot{\Vector{x}} }
\NewDocumentCommand\xd{}{ \dot{x} }
\NewDocumentCommand\xdd{}{ \ddot{x} }
\NewDocumentCommand\Z{}{ \Vector{z} }
\NewDocumentCommand\Zref{}{ \Vector{z}^{\text{ref}}}
\NewDocumentCommand\Zd{}{ \dot{\Vector{z}} }
\NewDocumentCommand\Zhat{}{ \hat{\Vector{z}} }
\NewDocumentCommand\Inv{m}{ {#1}^{-1} }
\DeclareMathOperator{\Prob}{\text{Pr}}
\DeclareMathOperator{\Ad}{\mathbf{A}_d}
\DeclareMathOperator{\Bd}{\mathbf{B}_d}
\DeclareMathOperator{\A}{\mathbf{A}}
\DeclareMathOperator{\B}{\mathbf{B}}
\DeclareMathOperator{\K}{\mathbf{K}}
\DeclareMathOperator{\Q}{\mathbf{Q}}
\NewDocumentCommand\R{}{r}
\DeclareMathOperator{\Pv}{\mathbf{P}}
\NewDocumentCommand\Hcon{}{\mathbf{h}}
\DeclareMathOperator{\Zcon}{\mathcal{Z}}
\DeclareMathOperator{\Ucon}{\mathcal{U}}
\newcommand{\umin}{u_{\text{min}}}
\newcommand{\umax}{u_{\text{max}}}
\NewDocumentCommand\e{}{\mathbf{e}}
\NewDocumentCommand\PSI{}{\psi(\Z^*_k,u_k)}
\NewDocumentCommand\vnom{}{v^{\text{nom}}}
\NewDocumentCommand\vref{}{v^{\text{ref}}}
\NewDocumentCommand\vdes{}{v^*}
\newcommand{\Horizon}{N}
\newcommand{\MPCLoss}{J_{\Horizon}}
\newcommand{\timestep}{\delta_t}
\NewDocumentCommand{\Feature}{}{\Vector{a}}
\NewDocumentCommand{\gam}{mmo}{\gamma^{#3}_{#1}(#2)}
\NewDocumentCommand{\kap}{mm}{\mathbf{k}_{#1} (#2)}
\NewDocumentCommand{\kapT}{mm}{\mathbf{k}^T_{#1} (#2)}
\NewDocumentCommand{\Kap}{}{\mathbf{K}}
\NewDocumentCommand\SafeProb{}{\delta}
\NewDocumentCommand{\PRS}{}{\mathcal{R}}
\NewDocumentCommand{\PRSerror}{}{\Delta}
\NewDocumentCommand{\quantile}{}{\rho}
\NewDocumentCommand{\Asoc}{}{\bar{\mathbf{A}}}
\NewDocumentCommand{\bsoc}{}{\bar{\mathbf{b}}}
\NewDocumentCommand{\csoc}{}{\bar{\mathbf{c}}}
\NewDocumentCommand{\dsoc}{}{\bar{\mathbf{d}}}
\NewDocumentCommand{\usoc}{}{\bar{\mathbf{u}}}
\newcommand{\new}[1]{#1}
\title{\LARGE \bf
Differentially Flat Learning-Based Model Predictive Control Using a Stability, State, and Input Constraining Safety Filter
}
\author{Adam W. Hall$^{1\;\dagger}$, Melissa Greeff$^{2\;\dagger}$, and Angela P. Schoellig$^{3\;\dagger}$
\thanks{$^{1}$Adam W. Hall is jointly with the Learning Systems and Robotics Lab (\href{www.learnsyslab.org}{\text{www.learnsyslab.org}}) and the STARS lab (\href{https://starslab.ca/ }{starslab.ca}) at the University of Toronto Institute for Aerospace Studies (UTIAS), Toronto, Canada. Email: adam.hall@robotics.utias.utoronto.ca}%
\thanks{$^{2}$Melissa Greeff is with the Robora Lab (\href{www.roboralab.com}{\text{www.roboralab.com}}) at the Department of Electrical and Computer Engineering, Queen's University, Kingston, Canada.}
\thanks{$^{3}$Angela P.~Schoellig is with the Learning Systems and Robotics Lab (\href{www.learnsyslab.org}{\text{www.learnsyslab.org}}) at the Technical University of Munich and the University of Toronto, and the Munich Institute for Robotics and Machine Intelligence (MIRMI), Munich, Germany.}
\thanks{$^{\dagger}$All authors are with the Vector Institute for Artificial Intelligence, Toronto, Canada.}
}
\begin{document}
\bstctlcite{IEEEexample:BSTcontrol}

\maketitle
\thispagestyle{empty}
\thispagestyle{empty}
\pagestyle{empty}

\begin{abstract}
Learning-based optimal control algorithms control unknown systems using past trajectory data and a learned model of the system dynamics.
These controllers use either a linear approximation of the learned dynamics, trading performance for faster computation, or nonlinear optimization methods, which typically perform better but can limit real-time applicability.
In this work, we present a novel nonlinear controller that exploits differential flatness to achieve similar performance to state-of-the-art learning-based controllers but with significantly less computational effort.
Differential flatness is a property of dynamical systems whereby nonlinear systems can be exactly linearized through a nonlinear input mapping.
Here, the nonlinear transformation is learned as a Gaussian process and is used in a safety filter that guarantees, with high probability, stability as well as input and flat state constraint satisfaction.
This safety filter is then used to refine inputs from a flat model predictive controller to perform constrained nonlinear learning-based optimal control through two successive convex optimizations.
We compare our method to state-of-the-art learning-based control strategies and achieve similar performance, but with significantly better computational efficiency, while also respecting flat state and input constraints, and guaranteeing stability.
\end{abstract}
\begin{IEEEkeywords}
Machine learning, Predictive control for nonlinear systems, Robotics.
\end{IEEEkeywords}


\vspace{-1.0em}
\section{INTRODUCTION}
\IEEEPARstart{I}{n} recent years, interest has grown in controlling safety-critical systems whose dynamics are partially unknown, like unmanned aerial vehicles, driverless cars, and mobile manipulators.
\new{Classically, guaranteeing safety and stability of these uncertain systems results in overly conservative behaviour, limiting their usage in real tasks.}
Using machine learning and past trajectory data inside classical control frameworks \new{to learn the system dynamics} has proven to be an effective, safe learning-based control technique, \new{but often requires slow nonlinear optimizations and can suffer from poor computational stability and efficiency} \cite{brunkeSafeLearningRobotics2022}.
For example, Gaussian process model predictive control (GPMPC) uses a Gaussian process (GP) to model the uncertain dynamics.
This learned model is then used inside a robust model predictive control (MPC) framework.
GPMPC has been used on mobile robots \cite{Ostafew2016a}, quadrotors \cite{torrenteDataDrivenMPCQuadrotors2021}, and other autonomous systems. 

A drawback of GPMPC is its computational complexity.
It either requires powerful on-board computation or remote computation of inputs, limiting its use in real systems.
Standard GPs require all of their training data to be stored in memory and their posterior mean and covariance predictions are computationally expensive, even when using approximate methods \cite{quinonero-candelaUnifyingViewSparse2005}---many approximations and `tricks' are used to achieve real-time operability.
There are, however, structural assumptions about the dynamics that can improve computational speed without sacrificing performance, such as incorporating a control-affine structure, and differential flatness.

Differential flatness is a property of many nonlinear dynamical systems that enables their transformation into linear systems through a nonlinear input mapping, called exact linearization \cite{fliessFlatnessDefectNonlinear1995}---this is an exact transformation, not an approximation.
Linear control techniques can then be used to compute a flat input that is subsequently transformed through this nonlinear mapping and applied to the real system.
Many real robotic systems are differentially flat, like quadrotors \cite{Greeff2018}, flexible-joint manipulators \cite{isidoriNonlinearControlSystems1995}, and mobile robots \cite{Sira-Ramirez2004}, to list a few.
In the literature, differential flatness has been exploited to control nonlinear systems using linear MPC \cite{Greeff2018}, called flat MPC (FMPC), \new{which achieves similar performance to nonlinear MPC while greatly improving computational efficiency.}

\begin{figure}
\centering
\includegraphics[width=\linewidth]{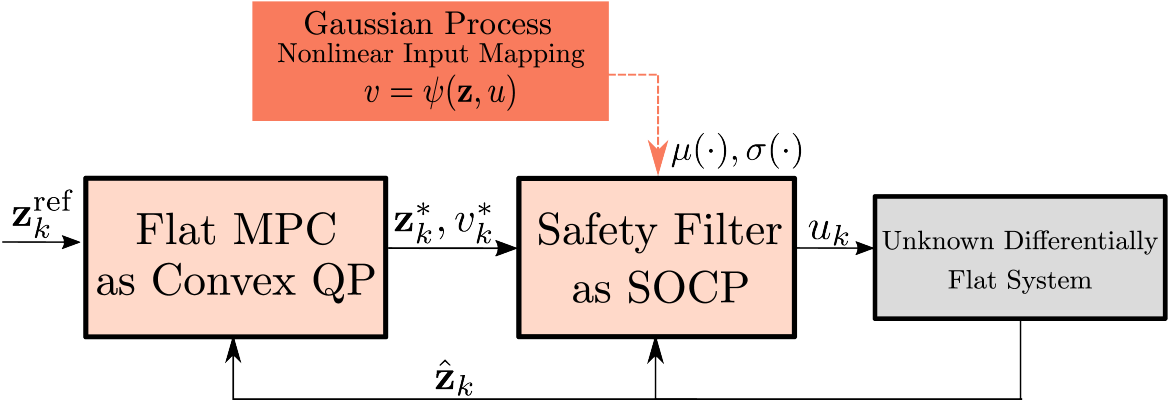}
\caption{
Our proposed architecture enables high-performance trajectory tracking for uncertain differentially flat systems by developing a learning-based model predictive controller that is computationally efficient to compute. 
We do this by solving two convex optimization problems: 1) a convex quadratic program in flat model predictive control that finds the flat input $v^*_k$ and state $\Z^*_k$ to track a reference $\Zref_k$; 2) a second-order cone program in the safety filter that uses a learned representation of the flat nonlinear input mapping $v = \psi(\Z,u)$, with posterior prediction mean $\mu(\cdot)$ and covariance $\sigma(\cdot)$, to perform probabilistic feedback linearization guaranteeing asymptotic stability, as well as state and input constraint satisfaction.  
  \label{fig:overview}} 
  \vspace{-2.0em}
\end{figure}



Differential flatness has also been exploited in safe learning-based controllers.
In \cite{Greeff2021}, the nonlinear input mapping and its inverse are learned from trajectory data and used in a robust linear quadratic regulator \new{(RLQR)} formulation to control an uncertain system, while guaranteeing an ultimate upper bound on tracking error. 
\new{This method is more computationally efficient than GPMPC}, however, still involves a demanding nonlinear optimization and cannot explicitly handle input constraints \new{or state constraints}.

More recently, \cite{greeffLearningStabilityFilter2021} has used a learned representation of the nonlinear input mapping inside of a safety filter that finds the input that best matches the desired input from a flat linear controller while guaranteeing probabilistic asymptotic stability.
In particular, the safety filter is formulated as a Second-Order Cone Program (SOCP) by exploiting differential flatness and the affine-control structure of the dynamics.
This SOCP \new{is more computationally efficient than GPMPC and RLQR} and can explicitly handle input constraints, but it cannot enforce any state constraint guarantees.

In this work, we build upon \cite{greeffLearningStabilityFilter2021} to provide three main contributions:
\begin{itemize}
    \item a novel safe learning-based MPC for nonlinear, differentially flat, control-affine systems that can be solved via two successive convex optimizations;
    \item a novel safety filter that guarantees, with high probability, \new{asymptotically stable} tracking error, and flat-state and input constraint satisfaction, modelled as an SOCP;
    \item a comparison with GPMPC, in simulation, that shows our approach achieves similar performance but is at least 10 times more computationally efficient.
\end{itemize}
\new{The advantage of MPC, used in this work, over linear quadratic regulation (LQR) used in \cite{greeffLearningStabilityFilter2021}, is anticipating the reference and constraint boundaries which avoids infeasible states and aggressive inputs, as shown in our simulated examples.} 
All code is available at \href{https://github.com/utiasDSL/fmpc_socp}{\text{github.com/utiasDSL/fmpc\_socp}}.

\section{Problem Statement} \label{sec:ps}
We consider a single-input continuous-time control-affine nonlinear system
\begin{equation} \label{eq:nonlin_dyn}
    \Xd(t) = f(\X(t)) + g(\X(t))u(t),
\end{equation}
with initial condition $\X(0) = \X_0$, where $\X(t) \in \Real^n$ and $u(t) \in \Real$ are the state and input of the system at time $t \in \Real_{\geq 0}$.
The maps $f : \Real^n \rightarrow \Real^n$ and $g : \Real^n \rightarrow \Real^n$ are unknown, but are assumed to be locally Lipschitz continuous.
It is assumed that a prior model of the unknown system is given by
\begin{equation} \label{eq:prior}
\dot{\mathbf{x}}(t) = \Prior{f}(\X(t)) + \Prior{g}(\X(t)) u(t).
\end{equation}

\begin{assumption} \label{as:diff_flat}
The system \eqref{eq:nonlin_dyn} and the prior model \eqref{eq:prior} \new{are single-input single-output systems} and differentially flat with respect to \new{a}  \textit{known} flat output $y = h(\X(t))$, with $y \in \Real$.
\end{assumption}

\begin{definition}[Differential Flatness \cite{fliessFlatnessDefectNonlinear1995}]
    A smooth single input nonlinear system is \textit{differentially flat} if there exists a \textit{flat output} $y \in \Real$ such that the input and all states can be uniquely determined from this \textit{flat output} and its derivatives $\Z = [y, \dot{y}, \ldots, y^{(n-1)}]^T$, and there exists smooth, invertible functions $\X = \Inv{\phi}(\Z)$, $u = \Inv{\psi}(\Z, v)$, where $v = y^{(n)}$.
\end{definition}

\begin{lemma}[Linearized Flat Dynamics \cite{ fliessFlatnessDefectNonlinear1995, isidoriNonlinearControlSystems1995}]
    A differentially flat system \eqref{eq:nonlin_dyn} can be transformed into a linear system in the Brunowsky Canonical form
    \begin{equation} \label{eq:ct_brunowsky}
        \Zd(t) = \A\Z(t) + \B v(t),
    \end{equation}
    where the new input $v$ is related to the system input via the nonlinear mapping
    \begin{equation}
        v = \psi(\Z,u). \label{eq:v_from_u}
    \end{equation}
    Furthermore, if the system is control affine, then this mapping has the specific form
    \begin{equation}
        v = \alpha(\Z) + \beta(\Z)u. \label{eq:v_from_u_a_b}
    \end{equation}
\end{lemma}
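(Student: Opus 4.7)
The plan is to unpack the definition of differential flatness and compute time derivatives component by component. Writing $\Z(t) = [y(t), \dot{y}(t), \ldots, y^{(n-1)}(t)]^T$, direct differentiation yields $\Zd(t) = [\dot{y}(t), \ddot{y}(t), \ldots, y^{(n-1)}(t), y^{(n)}(t)]^T$. Since $v = y^{(n)}$ by definition of the flat input, the first $n-1$ entries of $\Zd$ are simply the last $n-1$ entries of $\Z$, while the final entry is $v$. This is exactly the Brunowsky canonical form $\Zd = \A\Z + \B v$ with $\A$ equal to the nilpotent shift matrix (ones on the first super-diagonal, zeros elsewhere) and $\B = [0,\ldots,0,1]^T$, establishing \eqref{eq:ct_brunowsky}.

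Next, I would extract the mapping $v = \psi(\Z,u)$ in \eqref{eq:v_from_u} from the flatness definition, which supplies a smooth invertible map $u = \Inv{\psi}(\Z,v)$. Since this map is invertible in its $v$-argument for every fixed $\Z$, I would invert it pointwise to obtain $v = \psi(\Z,u)$. Smoothness of $\psi$ follows from the smoothness of $\Inv{\psi}$ together with the inverse function theorem applied in the $v$-argument.

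For the control-affine case \eqref{eq:v_from_u_a_b}, the plan is to work with Lie derivatives along the flat output $y = h(\X)$ of the dynamics \eqref{eq:nonlin_dyn}. I would show by induction that $y^{(k)} = L_f^k h(\X)$ for $k = 0, 1, \ldots, n-1$, i.e., $u$ does not appear in these lower-order derivatives; taking one more derivative then gives
\begin{equation}
v = y^{(n)} = L_f^n h(\X) + L_g L_f^{n-1} h(\X)\, u.
\end{equation}
Substituting $\X = \Inv{\phi}(\Z)$ produces the desired $v = \alpha(\Z) + \beta(\Z) u$ with $\alpha(\Z) = L_f^n h(\Inv{\phi}(\Z))$ and $\beta(\Z) = L_g L_f^{n-1} h(\Inv{\phi}(\Z))$, where $\beta(\Z) \neq 0$ is precisely the invertibility condition inherited from $\Inv{\psi}$.

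The main obstacle is the induction step: establishing that $y, \dot{y}, \ldots, y^{(n-1)}$ are independent of $u$, equivalently that the SISO flat output has relative degree $n$ (so that $L_g L_f^{k} h \equiv 0$ for $k = 0, \ldots, n-2$). This is not immediate from the bare definition of flatness stated above; instead, it relies on the standard equivalence, for single-input systems, between differential flatness and feedback linearizability with full relative degree. I would invoke this equivalence by citing \cite{fliessFlatnessDefectNonlinear1995} and the relevant chapter of \cite{isidoriNonlinearControlSystems1995} rather than reproducing the argument, since a self-contained proof would essentially restate well-known material on SISO feedback linearization.
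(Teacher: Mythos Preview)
The paper does not actually prove this lemma: it is stated with citations to \cite{fliessFlatnessDefectNonlinear1995,isidoriNonlinearControlSystems1995} and immediately followed by a remark, with no proof environment at all. The authors treat it as a standard background result imported from the flatness and feedback-linearization literature.

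Your proposal is therefore not in conflict with the paper --- you are supplying an argument where the paper supplies none. The sketch you give is the standard textbook route (direct differentiation for the Brunowsky chain, Lie-derivative computation for the affine structure, and the SISO equivalence of flatness with full relative degree $n$ to justify $L_g L_f^k h \equiv 0$ for $k \le n-2$), and it is correct. You rightly flag that the relative-degree step is where the real content lies and that reproducing it would amount to restating Isidori; citing the equivalence is exactly what the paper itself does implicitly by attaching the references to the lemma header. If anything, your proof is more detailed than what the paper intends, since the authors clearly regard this lemma as prerequisite material rather than a contribution.
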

\begin{remark}
Given that the mappings $f$ and $g$ in \eqref{eq:nonlin_dyn} are unknown, $\alpha$ and $\beta$ are also unknown.
\end{remark}

Our system \eqref{eq:nonlin_dyn} is subject to input constraints $\Ucon \coloneqq \{ u \in \Real \; | \; \umin \leq u \leq \umax \}$ and convex constraints on the flat-state vector $\Z \in \Zcon$. 

\begin{remark}
The constraints are enforced on the flat state $\Z$.
While the flat state differs from the state $\X$, in many robotic systems it still represents physical quantities that are constrained.
For example, in quadrotors, it represents the position, velocity, and acceleration; in flexible joint manipulators, it represents the output shaft joint angle and its derivatives \cite{isidoriNonlinearControlSystems1995}; and for mobile robotics, the commonly used unicycle model's flat state comprises position, velocity, and acceleration \cite{Sira-Ramirez2004}.
\end{remark}

The objective is to design a computationally efficient controller for the unknown system \eqref{eq:nonlin_dyn} that achieves high tracking performance of a reference trajectory $\Zref$, guarantees tracking convergence with high probability, and respects input and state constraints.

\vspace{-1.0em}
\section{Background} \label{sec:bk}

\subsection{Discrete-Time Control Lyapunov Function} \label{def:clf}
Given a \new{constant} sampling period $\timestep$, the discretization of \eqref{eq:ct_brunowsky} becomes
\begin{equation}
\Z_{k+1} = \Ad \Z_k + \Bd v_k, \label{eq:dt_drunowsky}
\end{equation}
where $\Ad$ and $\Bd$ are \new{Euler} discretizations of $\A$ and $\B$.
Further, $\Z_k = z(\timestep k)$ and $v_k = v(\timestep k)$ are time-sampled flat states and inputs at time step \new{$k \in \Integers_{\geq 0}$}.

Consider the smooth reference $\Zref(t) : [0,T) \rightarrow \Real^n$ and $ \vref(t) : [0,T) \rightarrow \Real$ sampled every $k\timestep$ to yield the discrete reference signals $\Zref_k = \Zref(k\timestep)$, $\vref_k = \vref(k\timestep)$.
The tracking error can be defined with respect to this reference as $\e_k = \Z_k - \Zref_k$.
\new{Given an error feedback control policy $v_k = -\K \e_k + \vref_k$ with gain $\K \in \Real^{1 \times n}$, the error dynamics become $\e_{k+1} = (\Ad - \Bd\K)\e_k$.}

\begin{definition}
If a function $V : \Real^n \rightarrow \Real_{\geq 0}$ satisfies
\begin{gather}
V(0) = 0 \text{ and } V(\e_k) > 0,\; \forall \e_k \in \Zcon \setminus \{ 0 \} \label{eq:lyap_cond1} \\
V(\e_{k+1}) \new{<} V(\e_k),\; \forall \e_k \in \Zcon \setminus \{0 \} \label{eq:lyap_decrease}
\end{gather}
for the error dynamics, then it is called a \textit{Control Lyapunov Function} (CLF) and its existence guarantees the asymptotic stability of the closed-loop dynamics.
\end{definition}

\begin{lemma}[\cite{isidoriNonlinearControlSystems1995}]
Given Assumption \ref{as:diff_flat}, if the transformation \eqref{eq:v_from_u} is known, $u_k$ can be chosen to cancel the nonlinear term \eqref{eq:v_from_u} and find a $v_k$ such that the resulting linear error dynamics are Hurwitz.
\end{lemma}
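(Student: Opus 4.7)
The plan is to combine the invertibility of $\psi$ guaranteed by differential flatness with a standard pole placement argument on the resulting linear system in flat coordinates. Since the statement concerns the discrete-time setting via the error dynamics $\e_{k+1} = (\Ad - \Bd\K)\e_k$ introduced just above the lemma, I would interpret "Hurwitz" in the discrete-time sense, i.e.\ the closed-loop matrix is Schur stable (all eigenvalues strictly inside the unit disk).

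First I would invoke Assumption~\ref{as:diff_flat} and the definition of differential flatness to note that $\psi(\Z, \cdot)$ is smooth and invertible in its second argument, so for any desired $v_k \in \Real$ and any $\Z_k$ there is a unique $u_k = \psi^{-1}(\Z_k, v_k)$ realizing that $v_k$. Plugging this $u_k$ into \eqref{eq:v_from_u} cancels the nonlinear dependence and reduces the closed-loop dynamics in flat coordinates to the linear Brunowsky form \eqref{eq:dt_drunowsky}. Thus the problem of choosing $u_k$ for the nonlinear system is replaced by the problem of choosing $v_k$ for the linear system.

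Next I would apply the error feedback $v_k = -\K\e_k + \vref_k$ already defined in Section~\ref{sec:bk}, which, as noted there, yields $\e_{k+1} = (\Ad - \Bd\K)\e_k$. Because $(\A, \B)$ in the continuous-time Brunowsky form is a chain of integrators, it is a controllable pair; the Euler discretization preserves controllability for sufficiently small $\timestep$ (the controllability matrix of $(\Ad, \Bd)$ has the same rank as that of $(\A, \B)$ in this case). By the standard pole-placement theorem, there exists $\K \in \Real^{1\times n}$ such that the spectrum of $\Ad - \Bd\K$ lies strictly inside the unit disk. Picking any such $\K$ yields the quadratic Lyapunov function $V(\e_k) = \e_k^T \Pv \e_k$ with $\Pv \succ 0$ solving the discrete Lyapunov equation $(\Ad - \Bd\K)^T \Pv (\Ad - \Bd\K) - \Pv \prec 0$, which verifies \eqref{eq:lyap_cond1}--\eqref{eq:lyap_decrease} and thus asymptotic stability of the closed-loop error dynamics.

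The argument is essentially routine once flatness is in hand; the only step needing mild care is verifying that the Euler discretization of the Brunowsky chain-of-integrators remains controllable so that pole placement is available, but this is immediate from the triangular nilpotent structure of $\A$. I do not anticipate a substantive obstacle, since the heart of the result is the exactness of the feedback linearization enabled by the known $\psi$, together with the standard linear systems fact that controllability plus pole placement gives Schur stability.
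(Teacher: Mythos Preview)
The paper does not actually prove this lemma; it is stated with a citation to Isidori and immediately followed by a remark, with no proof environment at all. Your argument is the standard textbook proof one would expect to find behind that citation: invert $\psi$ using flatness to reduce to the linear Brunowsky system, then invoke controllability of the chain of integrators and pole placement to obtain a Schur-stable closed loop. The steps are correct, including your observation that ``Hurwitz'' here should be read as Schur stability in the discrete-time setting, and the controllability-preservation check for the Euler discretization is a nice touch (though for the Brunowsky pair it is immediate, as you note). There is nothing to compare against in the paper itself beyond the citation.
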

\begin{remark}
The true dynamics are unknown, and thus the inverse transform $u = \psi(\Z, v)$ is unknown and is often approximated using the prior model \eqref{eq:prior}.
Such control methods, however, are not robust to model mismatch in \eqref{eq:v_from_u} \cite{Greeff2018}.
\end{remark}



\subsection{Gaussian Processes (GPs)} \label{sec:gp}
GPs are used to model nonlinear functions $\psi(\Feature) : \Real^{\dim(\Feature)} \rightarrow \Real$. 
They encapsulate a prior over possible functions. 
As data are collected, the possible functions for $\psi(\Feature)$ are refined and the GP obtains a posterior distribution over functions \cite{rasmussenGaussianProcessesMachine2006}. 
GPs assume that all collected data is jointly Gaussian with a prior mean and covariance.
GP regression is the process of finding the hyperparameters that optimize the log-likelihood of the marginal distribution over the sampled function data. 

Given a query point $\Feature^*$, the posterior prediction conditioned on the data $\Data = \{(\Feature_i,\hat{\psi}(\Feature_i) \}_{i=1}^{N_{\Data}}$ is given by the distribution $\psi(\Feature^*) | \Data \sim \NormalDist{\mean(\Feature^*)}{\cov^2(\Feature^*)}$.
Here, the posterior mean and covariance are 
$\mean(\Feature^*) = \kap{}{\Feature^*} \Inv{\Kap} \hat{\Psi}$, and
$\cov(\Feature^*) = k(\Feature^*, \Feature^*) - \kap{}{\Feature^*} \Inv{\Kap} \kapT{}{\Feature^*}$,
where $\kap{}{\Feature^*} = [k(\Feature^*,\Feature_i), \ldots , k(\Feature^*,\Feature_{N_\Data})]$, $ \Kap_{i,j} = k(\Feature_i, \Feature_j)$, and $\hat{\Psi} = [\hat{\psi}(\Feature_1), \ldots, \hat{\psi}(\Feature_{N_\Data})]^T$. See \cite{rasmussenGaussianProcessesMachine2006} for more details.

\vspace{-1.0em}
\section{Methodology}
In our proposed architecture, shown in Figure \ref{fig:overview}, we exploit the control-affine and differentially flat structure of the system \eqref{eq:nonlin_dyn} to design a learning-based model predictive controller. 
Our method solves two convex optimization problems at each time-step: a convex FMPC to determine the flat input to the linear dynamics \eqref{eq:dt_drunowsky} that best tracks a given reference trajectory; and a safety filter formulated as an SOCP that ensures probabilistic asymptotic stability, probabilistic flat-state constraint satisfaction, and input constraint satisfaction, even when \eqref{eq:nonlin_dyn} is uncertain.
By leveraging system data, we develop a controller that achieves high tracking performance despite unknown system dynamics \eqref{eq:nonlin_dyn}, but is still computationally efficient, \new{making onboard input computation more practical relative to the current state-of-the-art}.

This section follows the components of our controller given in Figure \ref{fig:overview}.
First, in Section \ref{sec:FMPC}, the FMPC is formulated, assuming the system dynamics \eqref{eq:nonlin_dyn} and the input mapping $\new{u = \psi^{-1}(\Z,v)}$ are known.
As \eqref{eq:nonlin_dyn} is not known, a learned representation of \eqref{eq:v_from_u} using GPs is used.
In particular, our approach exploits the affine structure of \eqref{eq:nonlin_dyn} and \eqref{eq:v_from_u}, as detailed in Section \ref{sec:aff_gp}.
Finally, in Section \ref{sec:sf}, the probabilistic feedback linearization (Section \ref{sec:prob_fbl}), probabilistic asymptotic stability (Section \ref{sec:stab}), and probabilistic flat-state constraint (Section \ref{sec:state}), are formulated as an SOCP safety filter (Section \ref{sec:socp}).


\vspace{-1.0em}
\subsection{Flat Model Predictive Control} \label{sec:FMPC}

FMPC \new{iteratively} solves a convex finite-horizon optimal control problem (OCP) to control the nonlinear flat system \eqref{eq:nonlin_dyn}.
In this section, the FMPC formulation is presented and builds on the FMPC formulation in \cite{Greeff2018}.
We highlight how FMPC can be designed such that the error dynamics, with respect to a flat reference $\Zref$, are asymptotically stable when the system dynamics \eqref{eq:nonlin_dyn} are known.


\begin{assumption}[\cite{gruneNonlinearModelPredictive2017}]\label{as:cost_bound}
There exists a cost function $\ell$ that is bounded by comparison functions $\bar{\zeta},\; \underline{\zeta} \in 
 \Comp_{\infty}$ such that
$\underline{\zeta}(\| \Z_k - \Zref_k\| ) \leq \inf_{v_k} \ell(\Z_k, \Zref_k,v_k) \leq \bar{\zeta}(\| \Z_k - \Zref_k \| ) \; \forall \; \Z_k \in \Zcon$.
\end{assumption}

\begin{remark}
We consider cost functions of the form $\ell(\Z_k, \Zref_k ,v_k) = (\Z_k - \Zref_k)^T \Q (\Z_k - \Zref_k) + v_k^2 \R $, where $\Q \in \Real^{n \times n},\; \Q \succ 0,\; \R \in \Real,\; \R > 0$.
\end{remark}

The convex finite-horizon OCP solved at each time step is given by
\begin{align} 
&\min_{\Z_{k|k:k+{\Horizon}}, v_{k|k:k+{\Horizon}-1}}   \MPCLoss(\Z_{k|k+1:k+{\Horizon}}, \Zref_{k|k+1:k+\Horizon}, v_{k|k:k+\Horizon-1}) \nonumber \\
&\text{s.t.} \;\;\; \label{eq:mpc}  \Z_{k|k} = \Zhat_k \\
& \qquad \Z_{k|i+1} = \Ad \Z_{k|i} + \Bd v_{k|i}, \; \forall i \in [k, k+\Horizon-1]  \\
& \qquad \Z_{k|i} \in \Zcon\;  \forall i \in [k, k+\Horizon], \nonumber 
\end{align}
where $\Horizon \in \Integers$ is the horizon length, $\hat{\Z}_k$ is the measured flat state at time step $k$,
$\MPCLoss(\cdot,\cdot , \cdot) = \sum_{i=k+1}^{k+{\Horizon}} \ell(\Z_{k|i}, \Zref_{k|i},v_{k|i-1})$
is the cost function, and the flat states $\Z_{k|k:k+{\Horizon}} = [ \Z_k, \ldots, \Z_{k+\Horizon}]$, the flat reference $\Zref_{k|k:k+{\Horizon}} = [ \Zref_k, \ldots, \Zref_{k+\Horizon}]$, and the flat inputs $v_{k|k:k+{\Horizon}-1} = [v_k, \ldots, v_{k+{\Horizon}-1}]$ are sequences for time step $k$ to time step $k+N$, computed at time step $k$.
The optimal solution to OCP \eqref{eq:mpc} at time step $k$ is given by the sequences $\Z^*_{k|k:k+{\Horizon}}$ and $v^*_{k|k:k+{\Horizon}-1}$.
We use the notation $\Z^*_k \coloneqq \Z^*_{k|k}$ and $v^*_k \coloneqq v^*_{k|k}$ for the optimal flat state and flat input computed at time step $k$. 
This state-input pair is used in the safety filter in Section \ref{sec:sf}.
This OCP then applies the zero-order hold input $u(t) = \Inv{\psi}(\Z^*_k,v^*_k) \; \forall \; t \in [k\timestep,(k+1)\timestep)$ to the system.


\begin{assumption}[\cite{gruneNonlinearModelPredictive2017}] \label{as:vf_bound}
There exists a comparison function $\xi \in \Comp_{\infty}$ and an integer $s \in \Integers$ such that, for all $\Z \in \Zcon$, the inequality 
$\inf_{v} J_{s}(\Z, \Zref, v) \leq \xi(\inf_{v} \ell(\Z, \Zref, v))$
holds for all $s \in \Integers$.
\end{assumption}


\begin{lemma}[{\new{\cite[Thm. 6.2]{gruneNonlinearModelPredictive2017}, \cite[Sec. 8.3]{borrelli2017predictive}}}] \label{lem:mpc}
Given Assumptions \ref{as:cost_bound} and \ref{as:vf_bound}, there exists \new{an} $\Horizon \in \Integers$ such that the sampled-data MPC, defined by OCP \eqref{eq:mpc}, is \textit{asymptotically stable} in the closed-loop with respect to the reference $\Zref$.
Furthermore, in the unconstrained case (i.e., where the flat-state constraint $\Z_i \in \Zcon $ is not present), the OCP \eqref{eq:mpc} permits an equivalent closed-form solution 
\begin{equation} \label{eq:fmpc_fb}
v^*_k = \new{-\K} (\Z_k - \Zref_k) + \vref_k,
\end{equation}
where $\K \in \Real^{1 \times n}$ is the equivalent gain matrix.
\end{lemma}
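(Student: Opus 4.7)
The plan is to treat the two parts of the lemma separately, since the first is a standard horizon-length argument for stabilizing MPC and the second is the classical discrete-time LQR equivalence for the unconstrained quadratic problem.

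For the asymptotic-stability claim, I would use the optimal value function $V_N(\Z_k,\Zref_k) \coloneqq \MPCLoss(\Z^*_{k|k+1:k+N},\Zref_{k|k+1:k+N},v^*_{k|k:k+N-1})$ as a candidate Lyapunov function on the error coordinate $\e_k = \Z_k-\Zref_k$. Assumption \ref{as:cost_bound} immediately provides the lower bound $\underline{\zeta}(\|\e_k\|)\leq V_N(\Z_k,\Zref_k)$. To obtain an upper bound of the same class, I would combine Assumption \ref{as:cost_bound} with Assumption \ref{as:vf_bound}: the bound $\inf_v J_s(\Z,\Zref,v)\leq \xi(\inf_v \ell(\Z,\Zref,v))\leq \xi(\bar\zeta(\|\e_k\|))$ controls the tails of the predicted cost, and summing over the $N$ stage costs together with dynamic-programming monotonicity yields $V_N(\Z_k,\Zref_k)\leq \bar\alpha_N(\|\e_k\|)$ for some $\bar\alpha_N\in\Comp_\infty$. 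The core step is then the relaxed dynamic-programming inequality of Grüne--Pannek: one shows that there exists $\alpha_N\in(0,1]$, with $\alpha_N\to 1$ as $N\to\infty$, such that $V_N(\Z_{k+1},\Zref_{k+1})\leq V_N(\Z_k,\Zref_k)-\alpha_N\,\ell(\Z_k,\Zref_k,v^*_k)$. Picking $N$ large enough that $\alpha_N>0$, Assumption \ref{as:cost_bound} converts this to the strict decrease $V_N(\Z_{k+1},\Zref_{k+1})\leq V_N(\Z_k,\Zref_k)-\alpha_N\underline{\zeta}(\|\e_k\|)$, which together with the two-sided $\Comp_\infty$ bounds gives asymptotic stability of the closed loop in the sense of \cite[Thm.~6.2]{gruneNonlinearModelPredictive2017}.

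For the unconstrained closed-form claim, I would remove the constraint $\Z_{k|i}\in\Zcon$ and recognize that OCP \eqref{eq:mpc} reduces to a finite-horizon discrete-time LQR problem on the linear Brunowsky system \eqref{eq:dt_drunowsky} in the shifted coordinates $\e_{k|i}=\Z_{k|i}-\Zref_{k|i}$ with feedforward $\vref$. The per-stage cost is quadratic in $(\e_{k|i},v_{k|i-1}-\vref_{k|i-1})$ up to reference-dependent terms that do not depend on the decision variables in a way that affects the minimizer structure. Solving by backward dynamic programming with terminal cost $V_0=0$, the Riccati recursion $\Pv_{i+1}=\Q+\Ad^T\Pv_i\Ad - \Ad^T\Pv_i\Bd(\R+\Bd^T\Pv_i\Bd)^{-1}\Bd^T\Pv_i\Ad$ yields a quadratic value function at each stage, and the optimal feedback takes the affine form $v^*_k=-\K\,\e_k+\vref_k$ with $\K=(\R+\Bd^T\Pv_N\Bd)^{-1}\Bd^T\Pv_N\Ad$, which is \eqref{eq:fmpc_fb}.

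The main obstacle is the relaxed dynamic-programming step that certifies $\alpha_N>0$ for sufficiently large $N$; the rest is either direct citation or a routine Riccati calculation. The subtlety is that the reference $\Zref$ is time-varying, so the Lyapunov argument must be carried out on the error coordinate $\e_k$ and I would verify that Assumptions \ref{as:cost_bound} and \ref{as:vf_bound} are stated in precisely the form needed for the time-varying tracking setting (they are: both involve $\|\Z_k-\Zref_k\|$ directly). Provided this, the argument proceeds exactly as in \cite[Thm.~6.2]{gruneNonlinearModelPredictive2017}, and the unconstrained LQR reduction gives the explicit gain $\K$ used in Lemma \ref{lem:mpc}.
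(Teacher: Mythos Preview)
The paper does not actually prove this lemma: it is stated purely as a citation of \cite[Thm.~6.2]{gruneNonlinearModelPredictive2017} and \cite[Sec.~8.3]{borrelli2017predictive}, with no accompanying argument. Your proposal is therefore not competing with any proof in the paper but rather unpacking the content of those references, and it does so correctly: the relaxed dynamic-programming / value-function-as-Lyapunov-function argument is precisely the mechanism of Gr\"une--Pannek's Theorem~6.2, and the Riccati backward recursion yielding the affine feedback $v^*_k=-\K\e_k+\vref_k$ is the standard finite-horizon LQR derivation from Borrelli et al. One minor point: your Riccati expression uses a terminal cost $V_0=0$ and the gain $\K$ from $\Pv_N$, which gives a \emph{time-varying} first-step gain for finite $N$; the paper later uses $\K$ and $\Pv$ as if they satisfy the steady-state discrete algebraic Riccati equation (see the stability-constraint derivation around \eqref{eq:sf_mid}), so you may want to either take $N$ large enough that the recursion has effectively converged or note that the equivalent gain in \eqref{eq:fmpc_fb} is the first-stage gain of the finite-horizon problem, which the paper tacitly identifies with the DARE solution.
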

\vspace{-1.0em}
\subsection{Gaussian Process Learning} \label{sec:aff_gp}
FMPC, presented in the previous section, relies on knowing \eqref{eq:nonlin_dyn} to compute the input $u_k = \Inv{\psi}(\Z^*_k,v^*_k)$ from the optimized trajectory (i.e., $\Z^*_k$ and $v^*_k$).
Given \eqref{eq:nonlin_dyn} is unknown, we propose to learn the map \eqref{eq:v_from_u} as a GP.
We thus encode the control-affine structure of \eqref{eq:v_from_u_a_b} in the kernel selection of the GP.
This structure enables us to formulate probabilistic stability and state constraints in our filter in Section \ref{sec:sf} as an SOCP.
Specifically, we select the following kernel \cite{castanedaGaussianProcessbasedMinnorm2021}
\begin{equation} \label{eq:aff_kern}
    k(\Feature_i,\Feature_j) = k_\alpha(\Z_i,\Z_j) +  u_i k_\beta(\Z_i,\Z_j) u_j + \delta_{i,j}\sigma_\eta^2.
\end{equation}
\begin{assumption} \label{ass:pd_kern}
$k_\alpha(\cdot,\cdot)$ and $k_\beta(\cdot, \cdot)$ are positive definite and bounded kernels.
\end{assumption}
\begin{lemma} \cite[Lem. 2]{greeffLearningStabilityFilter2021}
    Given Assumption \eqref{ass:pd_kern}, the affine kernel in \eqref{eq:aff_kern} is also bounded and positive definite.
\end{lemma}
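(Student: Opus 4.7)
The plan is to split the claim into two independent parts, positive definiteness and boundedness, and handle each by reducing the affine kernel \eqref{eq:aff_kern} to a sum of three elementary kernels whose properties are either assumed or well known. The key structural observation is that \eqref{eq:aff_kern} decomposes as
\begin{equation}
k(\Feature_i,\Feature_j) = k_\alpha(\Z_i,\Z_j) + k_u(\Feature_i,\Feature_j) + k_\eta(\Feature_i,\Feature_j),
\end{equation}
where $k_u(\Feature_i,\Feature_j) \coloneqq u_i\, k_\beta(\Z_i,\Z_j)\, u_j$ and $k_\eta(\Feature_i,\Feature_j) \coloneqq \delta_{i,j}\sigma_\eta^2$, and then to invoke the closure properties of the cone of positive definite kernels under sums.

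For positive definiteness I would argue each summand separately. The kernel $k_\alpha$ is positive definite by Assumption \ref{ass:pd_kern}. For $k_u$, the standard observation is that multiplying a positive definite kernel by a product of scalar functions of its two arguments preserves positive definiteness: for any finite $\{\Feature_i\}_{i=1}^N$ and any $c \in \Real^N$,
\begin{equation}
\sum_{i,j=1}^N c_i c_j\, u_i\, k_\beta(\Z_i,\Z_j)\, u_j \;=\; \sum_{i,j=1}^N (c_i u_i)(c_j u_j)\, k_\beta(\Z_i,\Z_j) \;\geq\; 0,
\end{equation}
since $k_\beta$ is positive definite by Assumption \ref{ass:pd_kern}. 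Finally, $k_\eta$ yields the Gram matrix $\sigma_\eta^2 \IdentityMatrix \succ 0$. Summing these three contributions shows that the Gram matrix of $k$ is a sum of positive (semi-)definite matrices and is in fact positive definite whenever $\sigma_\eta^2 > 0$, establishing the first half of the claim.

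For boundedness I would use the assumed bounds on $k_\alpha$ and $k_\beta$, say $|k_\alpha|\leq M_\alpha$ and $|k_\beta|\leq M_\beta$, together with the input constraint $u \in \Ucon$ from Section \ref{sec:ps}, which gives $|u| \leq U_{\max} \coloneqq \max\{|\umin|,|\umax|\}$. Then
\begin{equation}
|k(\Feature_i,\Feature_j)| \;\leq\; M_\alpha + U_{\max}^2\, M_\beta + \sigma_\eta^2,
\end{equation}
which is a finite constant independent of $\Feature_i,\Feature_j$ in the domain of interest, so $k$ is bounded.

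The only step that requires care is the scalar-multiplier argument for $k_u$; it is a routine fact but worth stating explicitly because the factor $u_i u_j$ is not itself a kernel evaluated at $\Feature_i,\Feature_j$ in the usual symmetric sense. A minor subtlety is that boundedness of $k$ on the whole space $\Real$ of inputs would fail without exploiting $\Ucon$; since the controller is designed to respect the input constraints, restricting to $\Ucon$ is natural and, as noted, sufficient.
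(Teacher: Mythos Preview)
Your argument is correct and is the standard closure-under-sums proof for composite kernels: the scalar-multiplier identity for $k_u$ is exactly the right way to handle the $u_i k_\beta(\Z_i,\Z_j)u_j$ term, and your observation that boundedness requires restricting $u$ to the compact set $\Ucon$ is both necessary and consistent with how the kernel is used in the paper. Note, however, that the paper does not supply its own proof of this lemma; it simply cites \cite[Lem.~2]{greeffLearningStabilityFilter2021}, so there is no in-paper argument to compare against---your write-up effectively fills in what the citation points to, and the decomposition you use is precisely the one underlying that cited result.
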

 Given a query point $\Feature = (\Z, u)$ and a regressed GP conditioned on $N_{\Data}$ noisy observations $\Data = \{ \Feature_i, \hat{\psi}(\Feature_i) \}_{i=1}^{N_{\Data}}$, the posterior mean prediction $\mean(\Feature)$ and variance $\cov(\Feature)$ are linear and quadratic in $u$, respectively, 
\begin{gather}
\mean(\Feature) = \gam{1}{\Z}[] + \gam{2}{\Z}[]u, \label{eq:aff_mean}\\
\cov^2(\Feature) = \gam{3}{\Z}[] + \gam{4}{\Z}[]u + \gam{5}{\Z}[]u^2, \label{eq:aff_cov}
\end{gather}
 where
 \begin{align}
\gam{1}{\Z}[] &= \kap{\alpha}{\Z} \Inv{\Kap} \hat{\mathbf{\Psi}}, \qquad \gam{2}{\Z}[] = \kap{\beta}{\Z} \Inv{\Kap} \hat{\mathbf{\Psi}}, \\
\gam{3}{\Z}[] &= k_\alpha(\Z, \Z) - \kap{\alpha}{\Z} \Inv{\Kap} \kapT{\alpha}{\Z}, \\
\gam{4}{\Z}[] &= -(\kap{\beta}{\Z} \Inv{\Kap} \kapT{\alpha}{\Z}  + \kap{\alpha}{\Z} \Inv{\Kap} \kapT{\beta}{\Z}),\\
\gam{5}{\Z}[] &= k_\beta(\Z,\Z) - \kap{\beta}{\Z} \Inv{\Kap} \kapT{\beta}{\Z}.
 \end{align}
Here, $\hat{\Psi} \in \Real^{N_\Data}$, with $\hat{\Psi}_i = \hat{\psi}(\Feature_i)$,
$\kap{\alpha}{\Z} \in \Real^{1 \times N_\Data}$ with $\kap{\alpha,i}{\Z} = k_\alpha(\Z,\Z_i)$, 
$\new{\kap{\beta}{\Z}} \in \Real^{1 \times N_\Data}$ with $\kap{\beta,i}{\Z} = k_\beta(\Z,\Z_i)$,
 and $\Kap \in \Real^{N_{\Data} \times N_{\Data}}$ with elements $ \Kap_{i,j} = k(\Feature_i, \Feature_j)$.

\vspace{-1.0em}
\subsection{Safety Filter} \label{sec:sf}

In this section, we use the learned GP model of \eqref{eq:v_from_u_a_b} in a safety filter design with three components: 1) probabilistic feedback linearization, 2) a probabilistic stability constraint, and 3) a probabilistic state constraint.
We then show how to implement the safety filter as a SOCP.  

\subsubsection{Probabilistic Feedback Linearization} \label{sec:prob_fbl}
We aim to find $u_k$ such that the flat input $v_k$ in \eqref{eq:dt_drunowsky} seen by the system closely matches the desired flat input $v^*_k$, optimized in FMPC. 
We select the input $u_k$ that minimizes the expected square distance between the desired flat input $v^{*}_k$ and the output of the GP model for \eqref{eq:v_from_u} as
\begin{equation} \label{eq:sf_expect}
\min_{u_k} \Expectation{\| \psi(\Z^*_k, u_k) - \vdes_k \|^2}.
\end{equation} 
When we query the GP model of \eqref{eq:nonlin_dyn} at $\Feature = (\Z^*_k,u_k)$ the posterior prediction of \eqref{eq:v_from_u} is normally distributed $\psi(\Z^*_k, u_k) \vert \Data = \NormalDist{\mean(\Z^*_k,u_k)}{\cov^2(\Z^*_k,u_k)}$. 
Consequently, \eqref{eq:sf_expect} can be written as $\min_{u_k} (\mean(\Z^*_k,u_k) - \vdes_k)^2 + \cov^2(\Z^*_k,u_k)$. 
Exploiting the affine form of the GP kernel selection allows for the mean and covariance to be substituted by \eqref{eq:aff_mean} and \eqref{eq:aff_cov}, respectively, further simplifying \eqref{eq:sf_expect} to
\begin{align} \label{eq:sf_min_full}
\min_{u_k} (\gamma^{*2}_2 + \gamma^*_5) u_k^2 + (2 \gamma^*_1 \gamma^*_2 - 2\gamma^*_2 \vdes + \gamma^*_4)u_k,
\end{align}
where $\gamma^*_i \coloneqq \gamma_i(\Z^*_k)$.
\begin{remark}
Following from \cite{greeffLearningStabilityFilter2021}, the optimization problem in \eqref{eq:sf_min_full} is convex since $\gamma_5^* \geq 0$ as it is the predicted covariance of $\beta(\Z)$ in \eqref{eq:v_from_u_a_b}, and the minimization is quadratic in the optimization variable $u_k$. 
\end{remark}

\subsubsection{Probabilistic Stability Constraints} \label{sec:stab}
We formulate a probabilistic stability constraint using the CLF from \eqref{eq:lyap_decrease} to ensure that the input $u_k$ guarantees probabilistic stability for the closed-loop system, despite \eqref{eq:nonlin_dyn} being unknown.

To formulate this constraint, consider the Lyapunov function of the form $V(\e_k) = \new{\e_k^T} \Pv \e_k$ and a nominal flat input $\vnom_k = -\K\e_k + \vref_k$, using the gain computed in \eqref{eq:fmpc_fb}. Then, the error at $k+1$ can be expressed as $\e_{k+1} = \Ad \e_k - \Bd \K \e_k + \Bd (\PSI - \vnom_k)$.
After using the discrete-time algebraic Ricatti equation, the CLF decrease condition \eqref{eq:lyap_decrease} can be expressed as
\begin{gather}
\e_k^T[\Pv - \Q - \R \K^T  \K] \e_k \\
 - 2 \e_k^T (\Ad - \Bd \K)^T \Pv\Bd (\PSI - \vnom_k) \label{eq:sf_mid} \\
 + (\PSI - \vnom_k)^2 \Bd^T \Pv \Bd \leq  \e_k^T \Pv \e_k\new{ - \epsilon,}
\end{gather}
\new{where $\epsilon > 0$ is a small constant to allow for the inequality to be non-strict.}
Thus, the left-hand side of \eqref{eq:sf_mid} is quadratic in $\psi(\Z^*_k,u_k)$ which we have learned as a GP.
We conservatively bound the last term $(\PSI - \vnom_k)^2 \Bd^T \Pv \Bd$ as the eigenvalues of $\Bd^T \Pv \Bd$ are proportional to $\timestep^2$, making this term small relative to the other terms.
Using the fact that the posterior mean prediction \eqref{eq:aff_mean} is affine in $u_k$ and that $u_k$ is bounded $\umin \leq u_k \leq \umax$,
\eqref{eq:sf_mid} can be written as
\begin{equation} \label{eq:sf_pre_prob}
-w_1 (\PSI - \vnom_k) \leq w_3 - w_2,
\end{equation}
where
\begin{gather}
w_1 \coloneqq 2\e_k^T (\Ad - \Bd \K)^T \Pv\Bd, \\
w_2 \coloneqq \Bd^T \Pv \Bd \max_{s = \{\umin, \umax\}} \|  \mean(\Z^*_k,s)- \vnom_k \|^2, \\
w_3 \coloneqq \e_k^T [\Q + \R \K^T  \K] \e_k \new{- \epsilon}.
\end{gather}

\begin{assumption} \label{ass:RKHS}
The nonlinear single-input control affine system \eqref{eq:nonlin_dyn} permits a bounded reproducible kernel Hilbert space (RKHS) norm $\| \psi(\Z_k,u_k) \|_\text{kern}$ with respect to the kernel \eqref{eq:aff_kern} used in the GP, and the GP's observation noise $\eta$ is uniformly bounded by $\sigma_\eta$.
\end{assumption}
\begin{lemma} \label{lem:bound}
Let $\delta \in (0,1)$. 
Given Assumption \ref{ass:RKHS}, 
$\Prob\{ -w_1(\mean(\Z^*_k,u_k) - \vnom_k) \leq w_3 - w_2 - | w_1 | \beta^{1/2} \cov(\Z^*_k,u_k)  \} \geq 1 - \delta$, 
where $\beta = 2 \| \psi(\Z^*_k,u_k) \|_{\text{kern}} + 300 \gamma \ln^3((N+1)/\delta)$,
and $\gamma \in \Real$ is the maximum information gain.
\end{lemma}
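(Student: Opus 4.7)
The plan is to invoke a Srinivas-type high-probability RKHS concentration bound for GP regression and observe that the lemma's stated inequality is exactly the algebraic form that bound takes when rearranged around the CLF-decrease target. Under Assumption \ref{ass:RKHS}, the Chowdhury--Gopalan extension of Srinivas et al.'s bound yields, with probability at least $1-\delta$ over the observation noise and uniformly over all queries $(\Z,u)$,
\begin{equation*}
\AbsoluteValue{\psi(\Z,u) - \mean(\Z,u)} \leq \beta^{1/2}\cov(\Z,u),
\end{equation*}
with $\beta$ exactly as in the lemma statement. Denote this event $E_\delta$, with $\Prob(E_\delta)\geq 1-\delta$. The ingredients are boundedness and positive-definiteness of the affine kernel \eqref{eq:aff_kern} (delivered by the preceding lemma on the affine kernel) together with the bounded-noise and bounded-RKHS-norm conditions of Assumption \ref{ass:RKHS}.

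Specializing the concentration bound at $(\Z^*_k, u_k)$ and folding it into $-w_1(\PSI - \vnom_k)$, first split
\begin{equation*}
-w_1(\PSI - \vnom_k) = -w_1(\mean(\Z^*_k,u_k) - \vnom_k) - w_1(\PSI - \mean(\Z^*_k,u_k)),
\end{equation*}
and bound the last term on $E_\delta$ via $-w_1(\PSI - \mean) \leq |w_1| \AbsoluteValue{\PSI - \mean} \leq |w_1|\beta^{1/2}\cov(\Z^*_k,u_k)$. This yields the one-line upper bound $-w_1(\PSI - \vnom_k) \leq -w_1(\mean(\Z^*_k,u_k) - \vnom_k) + |w_1|\beta^{1/2}\cov(\Z^*_k,u_k)$ on $E_\delta$. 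Rearranging the lemma's event $-w_1(\mean - \vnom_k) \leq w_3 - w_2 - |w_1|\beta^{1/2}\cov$ as $-w_1(\mean - \vnom_k) + |w_1|\beta^{1/2}\cov \leq w_3 - w_2$ exhibits it as precisely the condition that places the above upper bound at or below $w_3 - w_2$, i.e., as equivalent on $E_\delta$ to the CLF-decrease condition \eqref{eq:sf_pre_prob} evaluated at the true $\psi$. Because this equivalence holds identically on $E_\delta$, the stated inequality inherits the measure of $E_\delta$, giving $\Prob\{-w_1(\mean - \vnom_k) \leq w_3 - w_2 - |w_1|\beta^{1/2}\cov\} \geq \Prob(E_\delta) \geq 1-\delta$, which is the lemma.

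The main obstacle is pinning down the exact form of $\beta$: the $300\gamma\ln^3((N+1)/\delta)$ term is the self-normalized-martingale confidence radius from the Chowdhury--Gopalan argument, and invoking it here requires finiteness of the maximum information gain $\gamma$ for the affine kernel \eqref{eq:aff_kern} and sub-Gaussianity of the observation noise, both of which follow from Assumption \ref{ass:RKHS} together with the kernel-structure lemma that establishes boundedness and positive-definiteness of $k(\Feature_i,\Feature_j)$. Once the concentration bound is established with the stated $\beta$, the remainder is direct scalar algebra.
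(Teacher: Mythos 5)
Your proposal is correct and follows essentially the same route as the paper: both invoke the uniform high-probability confidence bound $|\psi(\Feature)-\mean(\Feature)|\leq\beta^{1/2}\cov(\Feature)$ for functions of bounded RKHS norm, split $-w_1(\psi(\Z^*_k,u_k)-\vnom_k)$ around the posterior mean, bound the residual by $|w_1|\beta^{1/2}\cov(\Z^*_k,u_k)$, and rearrange so that the tightened mean constraint implies \eqref{eq:sf_pre_prob} on the confidence event. The only cosmetic difference is attribution: the stated $\beta$ with the $300\gamma\ln^3((N+1)/\delta)$ term is exactly Theorem 3 of Srinivas et al., which the paper cites directly, rather than the Chowdhury--Gopalan variant you name (and your sign bookkeeping in the intermediate inequality is in fact cleaner than the paper's).
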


\begin{proof}
Given Assumption \ref{ass:RKHS} and  Theorem 3 in \cite{srinivasInformationTheoreticRegretBounds2012}, the mean prediction of a GP is bounded with respect to the true function evaluation as per
$\Prob \{ \forall \Feature \in \mathcal{A}, | \psi(\Feature) - \mean(\Feature) | \leq \beta^{1/2} \cov (\Feature ) \} \geq 1 - \SafeProb$.
When considering this probabilistic bound in the context of the left-hand side of \eqref{eq:sf_pre_prob} and expanding the absolute value, the upper inequality bound becomes
$-w_1 (\psi(\Z^*_k,u_k) - \vnom_k) \leq - w_1 (\mean(\Z^*_k,u_k) - \vnom_k) - | w_1 | \beta^{1/2} \cov(\Z^*_k,u_k).$
Using this expression with Theorem 3 from \cite{srinivasInformationTheoreticRegretBounds2012} means that the inequality holds true with probability $1 - \SafeProb$.
Using the probabilistic bound in \eqref{eq:sf_pre_prob}, yields the probabilistic constraint given in the lemma.
\end{proof}



\subsubsection{Probabilistic State Constraints} \label{sec:state}
We also ensure that $\Z_{k+1} \in  \Zcon$, with high probability, \new{by taking into account the uncertainty in the prediction of $\Z_{k+1}$ due to the uncertainty in the learned mapping $\psi(\Z^*_k,u_k)$}.
A constraint is required here beyond the state constraint in \eqref{eq:mpc} as the input can be modified by the safety filter, and the uncertainty in $\psi(\Z^*_k,u_k)$ must be accounted for.
We thus first determine the uncertainty in $\Z_{k+1}$, then use it to \new{tighten $\Zcon$ to} ensure $\Z_{k+1} \in  \Zcon$.

Using the posterior mean prediction from the GP in \eqref{eq:dt_drunowsky}, the mean of the next state becomes $\mean^z_{k+1} = \Ad \Z^*_k + \Bd \mean(\Z^*_k,u_k)$.
For brevity, we only consider uncertainty in $\mu^z_{k+1}$ due to $\cov(\Z^*_k,u_k)$.
Thus, the uncertainty in $\mu^z_{k+1}$ is given as $\cov^2_{\Z_{k+1}} = \Bd \cov^2(\Z^*_k,u_k) \Bd^\top$, which we use to tighten the constraint set $\Zcon$ using probabilistic reachable sets (PRS). 

\begin{definition}[One-Step PRS \cite{hewingCautiousModelPredictive2020a}]
Given the residual error of a random sample away from its mean $\PRSerror_k = \mu^Z_k - \Z_k$, a set $\PRS$ is called a One-Step PRS of probability level $\SafeProb$ if \new{$\Prob(\PRSerror_{k+1} \in \PRS | \PRSerror_k = 0) \geq \SafeProb$}.
\end{definition}

\begin{remark}
If we define an error term $\PRSerror^z_{k+1} = \mu^z_{k+1} - \Z_{k+1}$, we can define the tightened constraints on $\mu^z_{k+1}$ as 
\begin{equation} \label{eq:tight}
\mu^z_{k+1} \in \Zcon \ominus \PRS(\cov_{\Z_{k+1}}),
\end{equation}
where $\ominus$ represents the Pontryagin set difference.
\end{remark}

\begin{lemma}[Probabilistic Half-Space Constraints] \label{lem:half_space}
Consider a half-space constraint given by $\Zcon^{hs}  \coloneqq \{ \Z_k |\new{\Hcon}^T \Z_k \leq b \}$ with $\Hcon \in \mathbb{R}^n$ and $b \in \mathbb{R}_+$ defining the constraint.
Then, given the uncertainty in the dynamics propagation, the tightened constraint becomes
\begin{equation} \label{eq:z_tight}
\mathcal{Z}^{hs}(\cov_{\Z_{k+1}}) \coloneqq \left\{\Z_{k+1} |\new{\Hcon}^T \Z_{k+1} \leq b - \quantile(\SafeProb)\sqrt{\new{\Hcon}^T \cov_{\Z_{k+1}}^2\new{\Hcon}}\right\},
\end{equation}
which guarantees that the constraint will be satisfied given the uncertainty in the dynamics, with probability level $\SafeProb$.
Here, $\quantile$ is the quantile function of a standard Gaussian random variable.
\end{lemma}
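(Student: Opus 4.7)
The plan is to reduce the multivariate half-space constraint to a univariate chance constraint on a scalar Gaussian random variable and then invert the standard normal CDF. From the GP-based one-step propagation, $\Z_{k+1}$ is (approximately) Gaussian with mean $\mu^z_{k+1} = \Ad \Z^*_k + \Bd \mu(\Z^*_k,u_k)$ and covariance $\Sigma^2_{\Z_{k+1}} = \Bd \sigma^2(\Z^*_k,u_k)\Bd^T$, as set up just before the lemma. The tightened set in \eqref{eq:z_tight} is to be interpreted, consistent with \eqref{eq:tight}, as a constraint on the \emph{mean} $\mu^z_{k+1}$ of the predicted state (the Pontryagin difference $\Zcon \ominus \PRS(\Sigma_{\Z_{k+1}})$); the claim is then that whenever this tightened condition on the mean holds, the true random $\Z_{k+1}$ satisfies the original half-space constraint $\mathbf{h}^T \Z_{k+1} \leq b$ with probability at least $\delta$.

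First, I would project the multivariate constraint along the normal direction by defining the scalar random variable $Y \coloneqq \mathbf{h}^T \Z_{k+1}$. Since $\mathbf{h}$ is deterministic and $\Z_{k+1}$ is Gaussian, $Y$ is a scalar Gaussian with mean $m_Y = \mathbf{h}^T \mu^z_{k+1}$ and variance $\sigma_Y^2 = \mathbf{h}^T \Sigma_{\Z_{k+1}}^2 \mathbf{h}$. The event of interest, $\Z_{k+1} \in \Zcon^{hs}$, is exactly $\{Y \leq b\}$.

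Next, I would standardize: $(Y - m_Y)/\sigma_Y$ is standard normal, so
\begin{equation}
\Prob(Y \leq b) \;=\; \Phi\!\left(\frac{b - m_Y}{\sigma_Y}\right),
\end{equation}
where $\Phi$ is the standard normal CDF. Requiring $\Prob(Y \leq b) \geq \delta$ is therefore equivalent, by monotonicity of $\Phi^{-1}$, to
\begin{equation}
\frac{b - m_Y}{\sigma_Y} \;\geq\; \Phi^{-1}(\delta) \;=\; \rho(\delta),
\end{equation}
which rearranges to $m_Y \leq b - \rho(\delta)\,\sigma_Y$, i.e.
\begin{equation}
\mathbf{h}^T \mu^z_{k+1} \;\leq\; b - \rho(\delta)\sqrt{\mathbf{h}^T \Sigma_{\Z_{k+1}}^2 \mathbf{h}}.
\end{equation}
This is precisely the defining inequality of the set in \eqref{eq:z_tight}, establishing the claim.

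The only subtle point, and what I expect to be the main obstacle, is the notational reconciliation between the lemma (which writes the tightened set in terms of $\Z_{k+1}$) and the remark preceding it (which writes the tightened set in terms of $\mu^z_{k+1}$ via the Pontryagin difference). I would clarify that the tightening is applied to the deterministic predicted mean used inside the optimization, and the $\rho(\delta)\sqrt{\mathbf{h}^T \Sigma_{\Z_{k+1}}^2 \mathbf{h}}$ term is precisely the half-space representation of $\PRS(\Sigma_{\Z_{k+1}})$ along direction $\mathbf{h}$. Everything else is a routine application of Gaussian chance-constraint reformulation; no further assumptions beyond Gaussianity of the one-step posterior (inherited from the GP model in Section \ref{sec:aff_gp}) are required.
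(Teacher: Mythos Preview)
Your proposal is correct and essentially identical to the paper's proof: both project the constraint along $\Hcon$ to obtain a scalar Gaussian, then invert the standard normal quantile to get the tightening term $\rho(\delta)\sqrt{\Hcon^T \cov_{\Z_{k+1}}^2 \Hcon}$. The only cosmetic difference is that the paper phrases the intermediate step as constructing a one-step PRS for $\Hcon^T\Delta^z_{k+1}$ and applying the Pontryagin difference from \eqref{eq:tight}, whereas you go directly through the chance-constraint equivalence $\Phi((b-m_Y)/\sigma_Y)\geq\delta$; you already note this correspondence in your final paragraph.
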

\begin{proof}
Given the uncertainty in the next state, $\cov_{\Z_{k+1}}$ and under the random variable $\Delta^z_{k+1}$, the marginal distribution becomes $\Hcon^T\Delta^z_{k+1} \sim \NormalDist{0}{\new{\Hcon}^T \cov_{\Z_{k+1}}^2\new{\Hcon}}$.
Using the quantile function of a standard Gaussian $\quantile(\SafeProb)$ with probability level $\SafeProb$ a PRS can be constructed
\begin{equation}
\label{eq:z_prs}
\PRS^z(\sigma_z) \coloneqq \left\{ \PRSerror |\new{\Hcon}^T \PRSerror_{k+1}^z \leq \quantile(\SafeProb)\sqrt{\new{\Hcon}^T \cov_{\Z_{k+1}}^2\new{\Hcon}} \right\},
\end{equation}
as follows from \cite{hewingCautiousModelPredictive2020a}.
Using \eqref{eq:z_prs} in \eqref{eq:tight}, the probabilistic half-space constraint is as shown in Lemma \ref{lem:half_space}.
\end{proof}

\subsubsection{Safety Filter as an SOCP} \label{sec:socp}
Here, we formulate the safety filter as an SOCP.
%
%
\begin{theorem}[Second-Order Cone Program] \label{thm:socp}
Given Assumptions \ref{as:diff_flat}, \ref{as:cost_bound}, \ref{as:vf_bound}, \ref{ass:pd_kern}, and \ref{ass:RKHS}, the optimization problem given in \eqref{eq:sf_min_full} subject to the probabilistic asymptotic stability constraint given by Lemma \ref{lem:bound} and probabilistic state constraint Lemma \ref{lem:half_space} can be written as an SOCP
\begin{align} 
\min_{\usoc} \quad & [2\gamma_1^* \gamma_2^* - 2\gamma_2^* \vdes_k + \gamma_4^*, 1] \usoc, \\
\text{s.t.}\quad  & \new{\| \Asoc_i \usoc + \bsoc_i \| \leq \csoc_i^T \usoc + \dsoc_i \quad i \in \{1,2,3\},} \label{eq:full_socp} \\
& \umin \leq u_k \leq \umax,
\end{align}
where $\usoc = [u_k,q]^T$, $\Asoc_i \in \Real^{2 \times 2}$, $\bsoc_i \in \Real^2$, $\csoc_i \in \Real^{2}$, $\dsoc_i \in \Real$.
\end{theorem}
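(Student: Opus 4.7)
My plan is to put each piece of \eqref{eq:full_socp} into the canonical SOC form $\|\Asoc_i \usoc + \bsoc_i\| \leq \csoc_i^T \usoc + \dsoc_i$ by exploiting three structural facts. First, the posterior mean $\mu(\Z_k^*, u_k) = \gamma_1^* + \gamma_2^* u_k$ is affine in the scalar input $u_k$; second, the posterior variance $\sigma^2(\Z_k^*, u_k) = \gamma_3^* + \gamma_4^* u_k + \gamma_5^* (u_k)^2$ is a non-negative scalar quadratic in $u_k$; and third, the quadratic objective \eqref{eq:sf_min_full} admits an epigraph reformulation. I would introduce a slack $q$, set $\usoc = [u_k, q]^T$, and move the quadratic coefficient $(\gamma_2^{*2} + \gamma_5^*) \geq 0$ into the inequality $q \geq (\gamma_2^{*2} + \gamma_5^*)(u_k)^2$, which is a rotated-cone constraint that admits an SOC representation; this is the first SOC constraint ($i=1$) and leaves behind the linear objective stated in the theorem.

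Next, I would write $\sigma(\Z_k^*, u_k)$ as the Euclidean norm of an affine function of $u_k$. Because $\sigma^2(\Z_k^*, u_k) \geq 0$ for every $u_k \in \Real$, its discriminant satisfies $(\gamma_4^*)^2 - 4\gamma_3^*\gamma_5^* \leq 0$, and completing the square gives
\begin{equation*}
\sigma(\Z_k^*, u_k) = \left\|\begin{bmatrix}\sqrt{\gamma_5^*}\, u_k + \frac{\gamma_4^*}{2\sqrt{\gamma_5^*}} \\ \sqrt{\gamma_3^* - \frac{(\gamma_4^*)^2}{4\gamma_5^*}}\end{bmatrix}\right\|
\end{equation*}
(with the degenerate case $\gamma_5^* = 0$ collapsing to a scalar affine). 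With this in hand, I would rearrange the stability inequality of Lemma \ref{lem:bound}, isolating $|w_1|\beta^{1/2}\sigma(\Z_k^*, u_k)$ on the left and $w_3 - w_2 + w_1(\mu(\Z_k^*, u_k) - \vnom_k)$, which is affine in $u_k$ after substituting the mean, on the right; this yields the second SOC constraint ($i=2$). For the state constraint of Lemma \ref{lem:half_space}, I would use that $\Hcon^T\Bd$ is a scalar so $\sqrt{\Hcon^T \sigma_{\Z_{k+1}}^2 \Hcon} = |\Hcon^T\Bd|\,\sigma(\Z_k^*, u_k)$, substitute $\mu_{k+1}^z = \Ad \Z_k^* + \Bd \mu(\Z_k^*, u_k)$, and rearrange into the same SOC template; this is the third SOC constraint ($i=3$). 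The input bounds $\umin \leq u_k \leq \umax$ are already linear and require no transformation.

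The main obstacle is the square-root-of-variance factoring used in all three cones. One has to verify that the radicand $\gamma_3^* - (\gamma_4^*)^2/(4\gamma_5^*)$ is non-negative (which follows from the non-negativity of the predictive variance, since that quantity is the minimum of $\sigma^2(\Z_k^*, u_k)$ over $u_k$) and to handle the degenerate regime $\gamma_5^* \to 0$ separately, where the SOC constraint collapses to an affine inequality. A secondary check is that after substitution of the affine mean, the slack $q$ appears only in the objective and in the $i=1$ cone, so the data $\Asoc_i, \bsoc_i, \csoc_i, \dsoc_i$ for $i=2,3$ depend only on the current measured state, the reference, the GP evaluation at $\Z_k^*$, and the Lyapunov and feedback matrices $\Pv, \Q, \R, \K$.
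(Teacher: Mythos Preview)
Your proposal is correct and follows essentially the same route as the paper: an epigraph slack $q$ turns the quadratic objective into a linear one plus the rotated-cone constraint $i=1$, and the completing-the-square factorization of $\sigma(\Z_k^*,u_k)$ (the paper's \eqref{eq:cov_soc}) converts both the stability and half-space constraints into the SOC forms $i=2,3$. Your discussion of the non-negativity of the radicand and the degenerate case $\gamma_5^*=0$ is a welcome addition the paper leaves implicit.
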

\begin{proof}
First, a dummy variable $q \geq ({\gamma_2^*}^2 + \gamma_5^*) u_k^2$ is introduced into the optimization problem to reformulate the cost as linear in $\usoc$.
Moreover, $0 \geq (\gamma_2^{*2} + \gamma_5^*)u_k^2 - 4q$ which can be rewritten as $(1+q)^2 \geq 4(\gamma_2^{*2} + \gamma_5^*)u^2 + (1-q)^2$. 
Note that both sides of this inequality are positive, so this can be rewritten
as a standard SOC constraint where
$\Asoc_1 = \Diag{2 \sqrt{ \gamma_2^{*2} + \gamma_5^*}, -1}$, $\bsoc_1 = [0,1]^T$, $\csoc_1 = [0, 1]^T$, and $\dsoc_1 = 1$.
The optimization objective then becomes $[ 2\gamma^*_1 \gamma^*_2 - 2\gamma_2^* \vdes + \gamma_4^*, 1] \usoc$.

Next, consider the probabilistic stability constraint \eqref{eq:sf_pre_prob}.
Using the posterior prediction mean and covariance expressions \eqref{eq:aff_mean} and \eqref{eq:aff_cov}, it is noted that 
\begin{equation} \label{eq:cov_soc}
\sqrt{\gamma^*_3 + \gamma^*_{4}u_k + \gamma^*_{5}u_k^2} = \left\| \begin{bmatrix}
\sqrt{\gamma^*_5}u_k + \frac{\gamma^*_{4}}{2 \sqrt{\gamma^*_{5}}} \\
\sqrt{\gamma^*_{3} - \frac{\gamma^{*2}_4}{4\gamma^*_{5}}} \end{bmatrix} \right\|_2.
\end{equation}
Using this, the stability constraint can be rewritten as an SOC constraint with
 $\Asoc_2 = \Diag{w_1 \sqrt{\gamma_5^*}, 0}$, the vectors 
$\bsoc_2 = \left[ w_1 \frac{\gamma_4^{*}}{2\sqrt{\gamma_5^*}}, w_1 \sqrt{\gamma_3^* - \frac{\gamma_4^{*2}}{4\gamma_5^*}}\right]^T $ and $\csoc_2 = [ \frac{w_1 \gamma_5^*}{\beta^{1/2}}, 0]^T$, and the scalar 
$\dsoc_2 = (w_1\gamma_2^* + w_3 - w_2)/\beta^{1/2}$.

Finally, the probabilistic state constraint from Lemma~\ref{lem:half_space} is transformed into an SOC constraint.
First, see that the term $\sqrt{\new{\Hcon}^T \cov_{\Z_{k+1}}^2\new{\Hcon}} = \sqrt{\new{\Hcon}^T \Bd \Bd^T\new{\Hcon} } \cov(\Z^*_k,u_k) $.
This has the same form as the stability constraint, allowing \eqref{eq:cov_soc} to be used to transform the state constraint into an SOC constraint with $\Asoc_3 = \Diag{w_s\sqrt{\gamma_5^*}, 0}$, $\bsoc_3 = \left[ w_s \frac{\gamma_4^*}{2\sqrt{\gamma_5^*}}, w_s \sqrt{\gamma_3^* - \frac{\gamma_4^{*2}}{4\gamma_5^*}} \right]^T$, $\csoc_3 = [-\Hcon^T \Bd \gamma_2^*, 0]^T$ and $\dsoc_3 =  -\Hcon^T\Ad\Z^*_k -\new{\Hcon}^T\Bd\gamma_1^* + b$, with $w_s = \quantile(\SafeProb)\sqrt{\Hcon^T \Bd \Bd^\new{T}\new{\Hcon} }$
\end{proof}
The full controller algorithm is presented in Algorithm \ref{alg:full}.
\new{\begin{remark}
The SOC form is maintained when uncertainty in the measured flat state $\Zhat_{k}$ is considered, but is not shown here for brevity.
\end{remark}}
\setlength{\textfloatsep}{0pt}
\begin{algorithm}[t]
\caption{Proposed control algorithm.}\label{alg:full}
 $k \rightarrow 0$. \\
 \new{\uIf{Training offline}{
 Train \new{the nonlinear map} $v = \psi(\Z,u)$.\\}}
\While{$k\timestep \leq  T$}{ 
    Measure the current flat state $\Zhat_k$.\\
    Find the optimal flat state and input $\Z^*_k, v^*_k$ by solving the OCP \eqref{eq:mpc}. \\
    Find $u^*_k$ that minimizes \eqref{eq:full_socp} using $\Z^*_k$ and $v^*_k$. \\
    Apply $u^*_k$ to the real system.\\
    Set $k \leftarrow k + 1$. \\
     \new{\uIf{Training online}{
       Update the nonlinear map with measured data. \\}}
}
\end{algorithm}

\vspace{-1.0em}
\section{Simulation}
\begin{figure*}
\vspace{1pt} 
    \centering
     \begin{subfigure}[b]{0.3\textwidth}
         \centering
         \includegraphics[width=\textwidth]{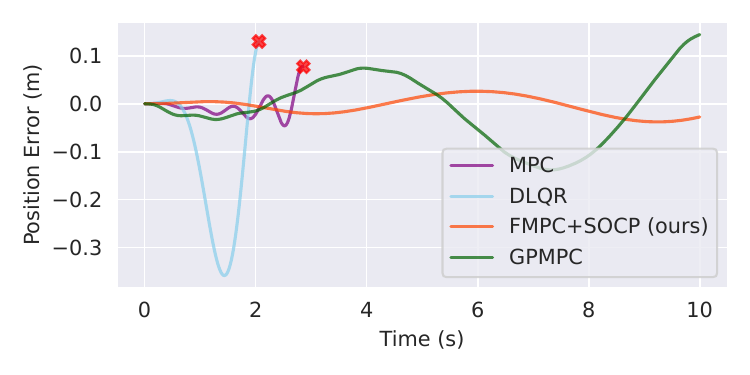}
         \caption{Tracking error for $y_\text{ref}(t) = 0.2t\sin(0.9t)$.}
         \label{fig:track}
     \end{subfigure}
     \hfill
     \begin{subfigure}[b]{0.3\textwidth}
         \centering
         \includegraphics[width=\textwidth]{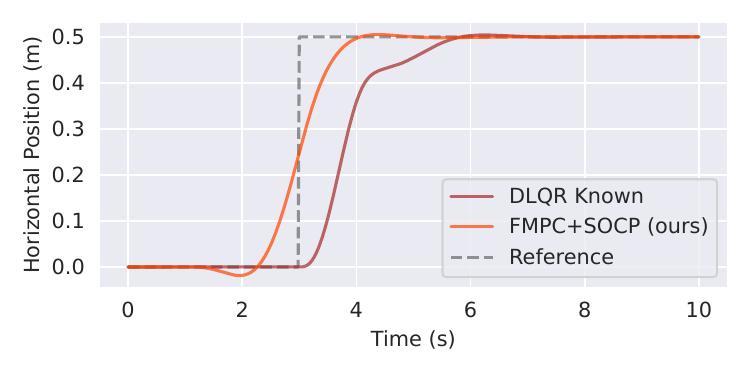}
         \caption{Tracking step with $-10^\circ \leq u_k \leq 10^\circ$.}
         \label{fig:step_in_con}
     \end{subfigure}
     \hfill
     \begin{subfigure}[b]{0.3\textwidth}
         \centering
         \includegraphics[width=\textwidth]{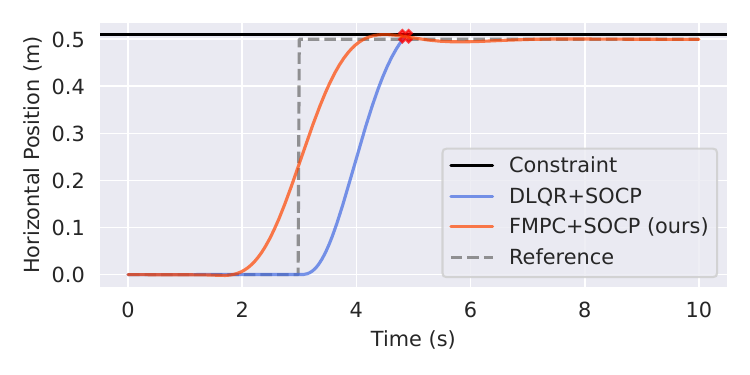}
         \caption{Step with constraint $x_k \leq 0.51$.}
         \label{fig:step_con}
     \end{subfigure} 
        \caption{In \ref{fig:track}, the tracking error of our Flat MPC with safety filter (FMPC+SOCP) is compared with an MPC and DLQR with inaccurate prior models, and a trained GPMPC \cite{hewingCautiousModelPredictive2020a}. In \ref{fig:step_in_con} the FMPC+SOCP is compared against a DLQR with perfect dynamics knowledge subject to input constraints, and in \ref{fig:step_con} FMPC+SOCP is compared with DLQR with the safety filter (DLQR+SOCP) subject to a constraint on the position.   
        Red $\times$ indicates a point of controller infeasibility or constraint violation. We see that FPMPC+SOCP performs similarly to GPMPC, while respecting input and state constraints.
        }
        \label{fig:three graphs}\
          \vspace{-1.5em}
\end{figure*}

\begin{figure}
         \centering
         \includegraphics[width=0.3\textwidth]{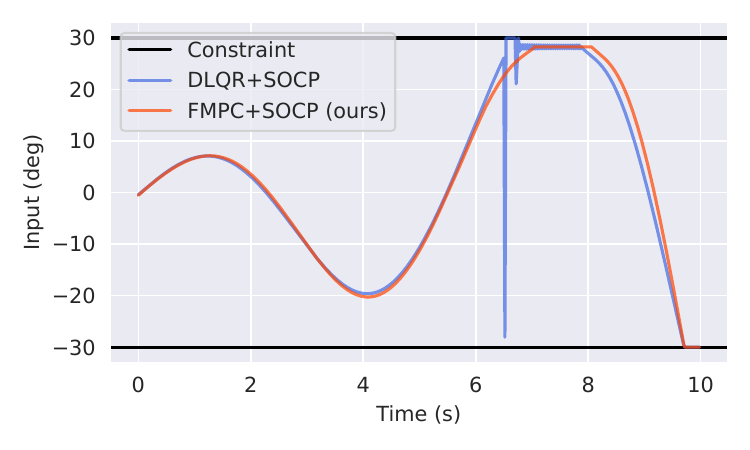}
         \vspace{-1.5em}
         \caption{ \new{Input comparison while tracking $y_\text{ref}(t) = 0.2t\sin(0.9t)$ subject to velocity constraints $\dot{x}_k \leq 1.0$ and input constraints $| u_k| \leq 30^{\circ}$.}}
         \label{fig:inputs}
\end{figure}

Our controller was assessed on three tasks using a horizontal 1-D quadrotor.
The quadrotor dynamics, as in \cite{Greeff2021} and \cite{greeffLearningStabilityFilter2021}, are given by $\xdd = \Gamma \sin(\theta)-\gamma \xd$ and $\dot{\theta} = \frac{1}{\tau}(u-\theta)$ where $x$ is the horizontal position, $\theta$ is the pitch angle of the quadrotor, and the commanded pitch angle $u$ is the system input.
Here, $ \Gamma=10$, $\gamma=0.3$ and $\tau=0.2$ are model parameters.
This model is flat in the output $y_k = x_k$, with the flat state $\Z_k = [x_k, \xd_k, \xdd_k]^T$.
All algorithms are run at 50 Hz, use the same gain matrices $\Q$ and $\R$, and horizons.

As shown in Figure \ref{fig:track}, FMPC with the safety filter (FMPC+SOCP) is compared against an MPC using an inaccurate model ($ \Gamma=20$, $\gamma=0$, and $\tau=0.05$, chosen to over-estimate thrust with no drag), and discrete linear quadratic regulator (DQLR) using the same inaccurate model, and a trained GPMPC where the uncertain dynamics are modelled as a GP inside a robust MPC formulation \cite{hewingCautiousModelPredictive2020a}.
Training data was gathered via \new{Latin hypercube sampling of states within the max and min values seen from the reference trajectories. Squared Exponential kernels were used for all GP kernels, trained offline. For this system, $v_k = \dddot{x}_k$, a higher-order derivative of the position, was measured from the simulation. On a real system, this would need to be estimated, which can be hard due to accumulated noise, presenting a limitation of our approach}.
The root mean squared errors were \qty{0.02}{m} and \qty{0.07}{m} for FMPC+SOCP and GPMPC, respectively. The average solve time per step was \qty[separate-uncertainty = true]{0.018(0.006)}{s} for FMPC+SOCP and \qty[separate-uncertainty = true]{0.29(0.04)}{s} for GPMPC \new{when run on a 16 GB RAM desktop using an AMD 3900xt CPU}.

The FMPC+SOCP evidently performs better than GPMPC, while solving the task more than 10 times faster.
This occurs in part because having the GP inside the nonlinear optimization can lead to worse local minima---something that nonlinear MPC is already prone to.
Additionally, the GP used in GPMPC requires an independent GP for each dimension of the state, meaning roughly five times as many training points were used in GPMPC than in our approach.

In Figure \ref{fig:step_in_con}, FMPC+SOCP is compared with DLQR where \eqref{eq:nonlin_dyn}, and thus \eqref{eq:v_from_u}, are known perfectly, but subject to input constraints $ | u_k | \leq 10^\circ$. 
The inclusion of DLQR is pertinent as it represents the infinite-horizon optimal solution.
In this case, DLQR's inputs are clipped if the desired input exceeds the limit.
We see that even in this case, the FMPC+SOCP outperforms the DLQR, as it is designed to account for input constraints.

Finally, FMPC+SOCP is compared against an inaccurate DLQR with the safety filter (DLQR+SOCP) on tracking a step trajectory with a state constraint $x_k \leq 0.51$ in Figure \ref{fig:step_con} \new{and tracking $y_\text{ref}(t) = 0.2t\sin(0.9t)$ subject to velocity constraints $\dot{x}_k \leq 1.0$ and input constraints $| u | \leq 30^{\circ}$ in Figure \ref{fig:inputs}}.
In Figure \ref{fig:step_con}, the predictive nature of the FMPC+SOCP clearly anticipates the step response and settles faster than the DLQR.
Additionally, DLQR+SOCP reaches an infeasible state as it overshoots the reference because it could not predict into the future nor account for the state constraint.
The FMPC+SOCP avoided violating the constraint boundary due to its predictive nature, remaining feasible.
\new{In Figure \ref{fig:inputs}, we see how the FMPC input anticipates the velocity limit, resulting in smooth inputs that never reach their limits. DLQR, however, gives very aggressive, and potentially damaging, changes in input to avoid violations.}

\vspace{-1.0em}
\section{Conclusion}
Current safe learning-based controllers' real-time applicability has been limited by computational performance.
The proposed FMPC+SOCP approach performs similarly to state-of-the-art learning-based controllers, but is 10 times more computationally efficient, while guaranteeing probabilistic asymptotic stability, probabilistic state constraint satisfaction, and input constraint satisfaction. 

\addtolength{\textheight}{-12cm}   




\vspace{-1.0em}
\bibliographystyle{IEEEtran} 
\bibliography{IEEEabrv,root}

\end{document}